\declaretheorem[name=Theorem]{theorem}
\declaretheorem[name=Lemma,numberwithin=section]{lemma}
\newcommand{\ang}[1]{\langle #1\rangle} 
\newcommand{\RE}{\mathbb{R}}            
\newcommand{\eps}{\varepsilon}          
\newcommand{\SP}{\kern+1pt}				
\newcommand{\SSP}{\kern+2pt}			
\DeclareMathOperator*{\OPT}{OPT}
\newcommand{\mtsp}{m$^3$TSP}
\begin{document}
\title{A PTAS for the Min-Max Euclidean Multiple TSP}

\author{Mary Monroe \\
        Amazon Web Services \\
        Crystal City, Arlington, Virginia \\
        mfmonroe@terpmail.umd.edu
	\and
	David M. Mount\\
		Dept.\ of Computer Science and \\
		Inst.\ for Advanced Computer Studies \\
		University of Maryland \\
		College Park, Maryland \\
		mount@umd.edu 
}

\date{\today}

\maketitle

\begin{abstract}
We present a polynomial-time approximation scheme (PTAS) for the min-max multiple TSP problem in Euclidean space, where multiple traveling salesmen are tasked with visiting a set of $n$ points and the objective is to minimize the maximum tour length. For an arbitrary $\eps > 0$, our PTAS achieves a $(1 + \eps)$-approximation in time $O \big(n ((1/\eps) \log (n/\eps))^{O(1/\eps)} \big)$. Our approach extends Arora's dynamic-programming (DP) PTAS for the Euclidean TSP~\cite{Arora98}. Our algorithm introduces a rounding process to balance the allocation of path lengths among the multiple salesman. We analyze the accumulation of error in the DP to prove that the solution is a $(1 + \eps)$-approximation.
\end{abstract}

\section{Introduction} \label{sec:intro}

The \emph{min-max multiple traveling salesmen problem} (or \emph{\mtsp}) is a generalization of the traveling salesman problem (TSP). Instead of one traveling salesman, the {\mtsp} involves multiple salesmen who together visit a group of designated cities with the objective of minimizing the maximum distance traveled by any member of the group. We consider the problem in the Euclidean setting, where the cities to be visited are points in $\RE^2$, and the distance between cities is the Euclidean distance. We support a formulation where each salesman begins and ends at the same designated depot. We present a randomized polynomial-time approximation scheme (PTAS) for this problem. 

While the majority of research on the multiple TSP problem has focused on minimizing the sum of travel distances, the min-max formulation is most appropriate for applications where the tours are executed in parallel, and hence the natural objective is to minimize the maximum service time (or \emph{makespan}). We assume that all salesmen travel at the same speed, and thus the objective is to minimize the maximum distance traveled by any of the salesmen.

\subsection{Definitions and Results} \label{subsec:defs}

The input to {\mtsp} consists of a set $C = \{c_1, \ldots, c_n\}$ of $n$ points (or ``cities'') in $\RE^2$, the number $k$ of salesmen, and a depot $d \in C$. The output is a set of $k$ \emph{tours}, where the $h$th tour is a sequence of cities that starts and ends at the depot. All of the cities of $C$ must be visited by at least one salesman. The length of a tour is just the sum of inter-city distances under the Euclidean metric, and the objective is to minimize the maximum tour length over all $k$ tours. In the approximate version, we are also given an approximation parameter $\eps > 0$, and the algorithm returns a set of $k$ tours such that the maximum tour length is within a factor of $(1 + \eps)$ of the optimal min-max length. Throughout, we assume that $k$ is a fixed constant, and we treat $n$ and $\eps$ as asymptotic quantities. Here is our main result.

\begin{theorem} \label{thm:main}
Given a set of $n$ points in $\RE^2$, a fixed integer constant $k \geq 1$, and a approximation parameter $\eps > 0$, there exists a randomized algorithm for {\mtsp} that runs in expected time $O\big(n ((1/\eps) \log (n/\eps))^{O(1/\eps)} \big)$.
\end{theorem}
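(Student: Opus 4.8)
The plan is to adapt Arora's dynamic‑programming PTAS for Euclidean TSP~\cite{Arora98} to the $k$‑salesman min‑max setting, the new ingredient being a discretization of partial tour lengths that lets the DP track how much of the total workload each salesman has already accrued. I would begin with Arora's preprocessing essentially verbatim: perturb and rescale the $n$ cities so that they lie on an integer grid of side $O(n/\eps)$ inside a bounding square (this changes $\OPT$ by at most a $(1+\eps)$ factor), and impose a random shifted quadtree dissection of the square together with a set of $O((1/\eps)\log(n/\eps))$ equally spaced \emph{portals} on every dissection edge. The first real step is a min‑max analogue of Arora's structure theorem: applying his Patching Lemma separately to each of the $k$ tours of an optimal solution, I would show that with constant probability over the random shift there is a set of $k$ tours whose maximum length is at most $(1+\eps)\OPT$ and in which every tour crosses every dissection edge only at portals and at most $O(1/\eps)$ times. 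Because patching is applied tour by tour, a single dissection edge may carry $O(k/\eps)$ crossings in total, but the per‑tour bound of $O(1/\eps)$ is what keeps the DP state small.

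Next I would set up the DP over the cells of the quadtree, bottom up. The state associated with a cell records, for each of its four bounding edges, how many times each of the $k$ salesmen uses each portal (a nonnegative integer vector summing to $O(1/\eps)$ per salesman per edge), a pairing of all these crossing‑endpoints that describes how the sub‑paths are connected inside the cell, and -- the part absent from Arora's DP -- a \emph{rounded} value of the length already committed to each of the $k$ salesmen within the subtree rooted at that cell. Leaves containing a single city are handled directly; to merge the four children of a cell I would enforce consistency of the portal‑usage vectors and pairings along shared edges, compose the pairings into one for the parent, and for each salesman add the four children's committed lengths to the lengths of the new portal‑to‑portal bridging segments routed inside the cell, re‑rounding the sum to the DP grid. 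At the root I would require that all $k$ tours close up and each passes through the cell containing the depot $d$, then report the minimum over root states of the maximum, over the $k$ salesmen, of the committed length; the tours themselves are recovered by the usual backpointer walk.

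Correctness then has two parts. The structure theorem guarantees that a $(1+\eps)$‑approximate solution is representable by \emph{some} choice of DP states, and since the DP optimizes over all consistent states it returns something at least as good, \emph{modulo} the rounding of lengths. The crux -- and what I expect to be the main obstacle -- is bounding the total error introduced by that rounding. Lengths cannot be stored exactly, but if the rounding granularity $\delta$ is too coarse the answer degrades, while if it is too fine the state space (hence running time) blows up; the target is only $\text{poly}((1/\eps)\log(n/\eps))$ distinct length values per salesman so that the per‑cell state count stays $((1/\eps)\log(n/\eps))^{O(1/\eps)}$. I would first obtain a constant‑factor estimate of $\OPT$ by binary search over the $O(\log(n/\eps))$ scales possible after the grid rescaling, set $\delta$ proportional to $\eps\OPT$ divided by a suitable polynomial in $(1/\eps)\log(n/\eps)$, and then prove by a telescoping argument across the $O(\log(n/\eps))$ levels of the dissection that each salesman's committed length at the root differs from its true value by at most $\eps\OPT$. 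This is exactly the ``accumulation of error in the DP'' analysis, and it is delicate because a single tour winds through many cells, so the naive bound of $\delta$ per cell is far too weak and one must instead charge the rounding error per dissection level.

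Finally, the running time: the quadtree has $O(n\log(n/\eps))$ relevant nodes, the per‑cell state count is $((1/\eps)\log(n/\eps))^{O(1/\eps)}$ (portal‑usage vectors and pairings contribute the $O(1/\eps)$ in the exponent, and the $k$ rounded lengths only a further polynomial factor since $k$ is a fixed constant), and merging the four children of a node takes time polynomial in the state count by processing edges one at a time in Arora's manner. Multiplying these yields the claimed $O\big(n((1/\eps)\log(n/\eps))^{O(1/\eps)}\big)$ expected bound, the expectation being over the random shift; repeating $O(\log(1/\eps))$ times and keeping the best tour set turns the constant success probability of the structure theorem into $1-\eps$, and rescaling $\eps$ by a constant absorbs the $(1+O(\eps))$ factors into $(1+\eps)$.
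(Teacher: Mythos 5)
Your overall architecture (perturbation, shifted quadtree, per-tour $(m,r)$-light structure theorem, a DP whose state adds a rounded committed length per salesman) matches the paper. The genuine gap is in the one place where the paper's contribution actually lies: the rounding scheme and its error accumulation. You propose a single \emph{additive} grid of granularity $\delta \approx \eps\OPT/\mathrm{poly}\big((1/\eps)\log(n/\eps)\big)$ and assert that a ``telescoping argument across the $O(\log(n/\eps))$ levels'' bounds the total drift by $\eps\OPT$. But with a uniform additive grid the rounding error accrues \emph{per active node}, not per level: every cell in which a tour appears (or at least every leaf, if you avoid re-rounding at merges by noting that sums of multiples of $\delta$ are multiples of $\delta$) can contribute up to $\delta$, and after the rescaling a single tour can be active in $\Theta(n/\eps)$ cells. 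Charging ``per dissection level'' does not help, because the many nodes of one level each contribute independent additive error that simply sums. So either the total error is $\Theta(n/\eps)\cdot\delta \gg \eps\OPT$, or you must take $\delta \lesssim \eps^2\OPT/n$, in which case the number of distinct length values per salesman is polynomial in $n$ and the merge step (which enumerates length tuples over $k$ salesmen and four children) costs an extra $n^{\Theta(k)}$ factor, destroying the claimed $O\big(n((1/\eps)\log(n/\eps))^{O(1/\eps)}\big)$ bound. The paper avoids this by rounding on a \emph{geometric} scale with ratio $1+\alpha$, $\alpha = \Theta(\eps/\log L)$, running from the interportal distance $L/(2^i m)$ of the level up to $L^2$; this gives only $O\big((1/\eps)\log^2(n/\eps)\big)$ values per level (Lemma~\ref{lem:runtime-lem}), and because \emph{relative} error composes under addition, each level multiplies the error by at most $(1+\alpha)$ regardless of how many nodes the tour touches, yielding $(1+\alpha)^{\log L} \le 1+\eps'$ (Lemmas~\ref{lem:error} and~\ref{lem:structure_rounding}). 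Your scheme, as described, has no mechanism that makes per-level charging valid.

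A secondary, smaller issue: asserting that patching each of the $k$ tours separately succeeds ``with constant probability'' over the shift needs an explicit union bound. The paper gets this by scaling Arora's parameter to $s = 12gk/\eps'$ (hence $m = O(k\log L/\eps)$, $r = O(k/\eps)$), so that Markov gives per-tour failure probability $1/(2k)$ and all $k$ tours survive with probability $\ge 1/2$ (Lemma~\ref{lem:structure_arora}). Since $k$ is a constant this only changes constants, and your $O(\log(1/\eps))$ repetitions (or exhaustive derandomization over shifts, as the paper notes) are fine, but the $k$-dependent parameter choice should be stated rather than inherited silently from the single-tour theorem.
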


Note that the big-O notation in the polynomial's exponent conceals a constant factor that depends on $k$.

\subsection{Related Work} \label{subsec:related_work}

The multiple traveling salesmen problem with the min-max objective has been studied from both a theoretical and practical perspective. Arkin, Hassin, and Levin describe approximation algorithms for a variety of vehicle routing problems within graphs~\cite{Ark06}. This includes the minimum path cover, where we are given a bound on each path length and the objective is to minimize the number of salesmen required to visit each node. Most notably, they provide a 3-factor approximation to the min-max path cover, which is equivalent to our {\mtsp} problem (in graphs, not Euclidean space) when depot locations are assigned to each salesman.  Xu and Rodrigues give a $3/2$-factor approximation for the min-max multiple TSP in graphs where there are multiple distinct depots~\cite{Xu10}. In both cases, the number of salesmen $k$ is taken to be a constant.

Researchers have also explored different heuristic approaches to solving the min-max problem. França, Gendreau, Laporte, and Müller provide two exact algorithms for {\mtsp} in graphs with a single depot. They also present an empirical analysis of their algorithms with cases for $\leq 50$ cities and $m = 2,3,4,5$ ~\cite{Franc95}. Other approaches include neural networks and tabu search (see, e.g., \cite{Mats14, Franc95}). Carlsson, \textit{et al.} apply a combination of linear programming with global improvement and a region partitioning heuristic~\cite{Car07}.

Even, Garg, Könemann, Ravi, and Sinha present constant factor approximations for covering the nodes of a graph using trees with the min-max objective~\cite{Even04}. The algorithms run in polynomial time over the size of the graph and $\log({1/\eps})$. In this case, the trees are either rooted or unrooted, and there is a given upper bound on the number of trees involved. The authors describe the instance as the ``nurse station location'' problem, where a hospital needs to send multiple nurses to different sets of patients such that the latest completion time is minimized. 

Becker and Paul considered multiple TSP in the context of navigation in trees~\cite{Be19}. Their objective is the same as ours, but distances are measured as path lengths in a tree. They present a PTAS for the min-max routing problem in this context. In their formulation, the root of the tree serves as the common depot for all the tours. They strategically split the tree into clusters that are covered by only one or two salesmen. Their algorithm is based on dynamic programming, where each vertex contains a limited configuration array of rounded values representing the possible tour lengths up to that point. This rounding reduces the search space and allows for a polynomial run-time. Our approach also applies a form of length rounding, though our strategy differs in that it rounds on a logarithmic scale and performs dynamic-programming on a quadtree structure.

Asano, Katoh, Tamaki, and Tokuyama considered a multi-tour variant of TSP, where for a given $k$ the objective is to cover a set of points by tours each of which starts and ends at a given depot and visits at most $k$ points of the set~\cite{Asano97}. They present a dynamic programming algorithm in the style of Arora's PTAS whose running time is $(k/\eps)^{O(k/\eps^3)} + O(n \log n)$.

\subsection{Classic TSP PTAS}

In his landmark paper, Arora presented a PTAS for the standard TSP problem in the Euclidean metric~\cite{Arora98}. Since we will adopt a similar approach, we provide a brief overview of his algorithm. It begins by rounding the coordinates of the cities so they lie on points of an integer grid, such that each pair of cities is separated by a distance of at least eight units. (See our version of perturbation in Section~\ref{subsec:perturbation}.) This step introduces an error of at most $\frac{\eps \OPT}{4}$ to the tour length, where $\OPT$ is the length of the optimum tour. Let $L$ denote the side length of the bounding box after perturbation; note that $L = O(n/\eps)$.

The box is first partitioned through a quadtree dissection, in which a square cell is subdivided into four child squares, each of half the side length. This continues until a cell contains at most one point. Due to rounding, each cell has side length at least $1$  (see Figure~\ref{fig:quadtree}, left). The dissection is randomized by introducing an $(a, b)$-shift. The values $a$ and $b$ are randomly-chosen integers in $[0, L)$; an $(a,b)$ shift adds $a$ to the $x$ coordinate and $b$ to the $y$ coordinate of the dissection's lines before reducing the value by modulo $L$ (see Figure~\ref{fig:quadtree}, right). We identify each square in the quadtree based on its level in the recursion, so that a level $i$ node is a square created after performing our recursive split $i$ times.

\begin{figure}[htbp]
  \centerline{\includegraphics[scale=1]{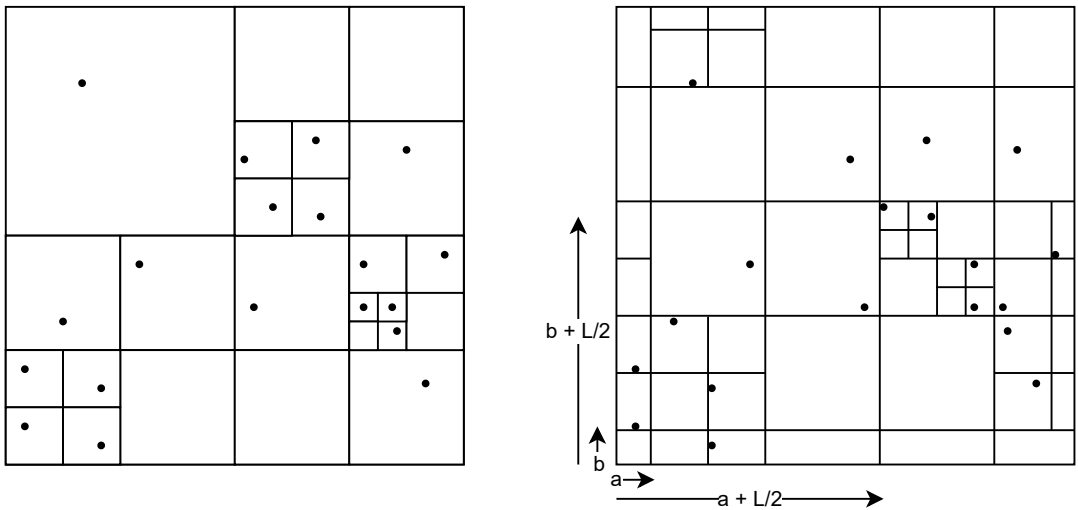}}
  \caption{An example of a quadtree structure incurred on a set of points. The left shows the original quadtree structure, while the right displays an $(a,b)$-shift applied to the same set of points.}
  \label{fig:quadtree}
\end{figure}

The meat of Arora's algorithm lies in his notion of portals. He defines an $m$-regular set of portals for the shifted dissection as a set of $m$ evenly spaced points, called \emph{portals}, placed on each of the four edges of a quadtree square. Moreover, one portal is placed on each corner. He defines a salesman tour as \emph{$(m,r)$-light} if it crosses each edge of the square only at a portal, at most $r$ times. His DP algorithm proceeds bottom-up from the leaves of the quadtree, calculating the optimal TSP path within the square subject to the $(m,r)$-light restriction. In the end, this approach takes polynomial time over $n$.

Arora proves in his \emph{Structure Theorem} that, given a randomly-chosen $(a,b)$-shift, with probability at least $1/2$ the $(m,r)$-light solution has a cost of at most $(1 + \eps) \OPT$ (where $m = O\left(\frac{\log L}{\eps}\right)$ and $r = O\left(\frac{1}{\eps}\right)$). The proof shows that perturbing the optimal tour to be $(m,r)$-light with respect to each quadtree square adds only a small amount of error. The probability of cost is calculated through the expected value of extra length charged to each line in the quadtree.

Our algorithm shares elements in common with Arora's algorithm, including quadtree-based dissection, random shifting, and $(m,r)$-light portal restrictions, all within a dynamic-programming approach. However, to handle multiple tours, we alter the DP formulation and develop a rounding strategy to balance the lengths of the various tours.

\section{Algorithm} \label{sec:algorithm}

In this section we present our algorithm for {\mtsp}.

\subsection{Perturbation} \label{subsec:perturbation}

Before computing the dissection on which the DP is based, we begin by perturbing and rounding the points to a suitably sized square grid in the same manner as Arora. That is, we require that (i) all points have integral coordinates, (ii) any nonzero distance between two points is at least eight units, and (iii) the maximum distance between two points is $O(n/\eps)$. Take $L_0$ as the original size of the bounding box of the point set, and let $\OPT$ denote the optimal solution's makespan (bearing in mind that $\OPT \geq L_0/k$). To accomplish (i), a grid of granularity $\frac{\eps L_0}{8 k n}$ is placed and all points are moved to the closest coordinate. This means that, for a fixed order of nodes visited, the maximum tour length is increased by at most $2 n \frac{\eps L_0}{8 k n} = \frac{\eps L_0}{4 k} < \frac{\eps \OPT}{4}$. To accomplish (ii) and (iii), all distances are divided by $\frac{\eps L_0}{64 k n}$. This step leads to nonzero internode distances of at least $\frac{\eps L_0}{8 k n} / \frac{\eps L_0}{64 k n} = 8$. Moreover, we have that $L = \frac{64kn}{\eps} = O(n/\eps)$, so the maximum internode distance is $O(n/\eps)$. Note that the error incurred on $\OPT$ by snapping points to the grid could be as much as $\frac{\eps \OPT}{4}$.

\subsection{DP Program Structure} \label{subsec:dp_structure}

We define the same $(m,r)$-light portal restriction for each of our tours; however, in our case we have $m = O\left(\frac{k\log L}{\eps}\right)$ and $r = O\left(\frac{k}{\eps}\right)$ (see \ref{lem:structure_arora} for the specific values). 

On top of Arora's portal-pairing argument for each quadtree node, leading to a DP lookup table of tours, we add another restriction for our $k$ tours. For each separate TSP tour, in each quadtree node, we store all possible path lengths as booleans rounded on a certain scale. We need this rounding to manage our algorithm's running time, for otherwise, the number of possible tour lengths would explode as we climb further up our DP table.

\begin{figure}[htbp]
  \centerline{\includegraphics[scale=0.75]{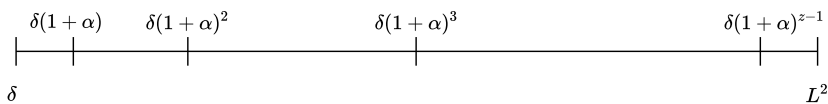}}
  \caption{A visualization of the rounding scale}
  \label{fig:dp_scale}
\end{figure}

Our algorithm rounds any given path length within a level $i$ node to the nearest tick mark above. This scale, shown in Figure 1, increases by a multiplicative factor of $1+\alpha$. Its lower bound, $\delta$, equal to Arora's interportal distance within the level, is $\frac{L}{2^i m}$ (where $L = \frac{64kn}{\eps}$ is our bounding box side length). Arora's single tour TSP solution allows for $\delta$ amount of error in adjusting the tour to only cross edges at portals. Meanwhile, $\alpha = \frac{2 \eps}{\log L} $ represents the degree of granularity we have in our rounding scale (the total error in rounding a tour of length $l$ can be no larger than $(1 + \alpha)l$).

The upper bound on the scale is obtained by applying a basic algorithm to visit each point (which are all snapped to a grid thanks to our perturbation step in section~\ref{subsec:perturbation}). One salesman is chosen that simply zig-zags through each grid line; this leads to a tour length bounded by $L^2$ (see Figure 2).

\begin{figure}[htbp]
  \centerline{\includegraphics[scale=0.75]{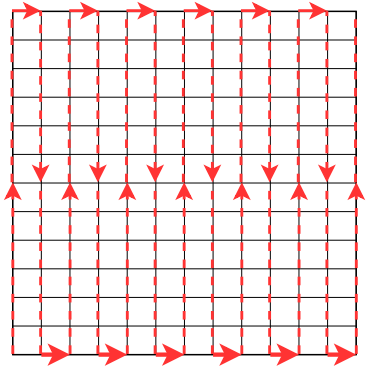}}
  \caption{The single salesman tour that leads to our upper bound on the TSP tour length}
  \label{fig:upper_bound}
\end{figure}

\subsubsection{Configurations} \label{subsubsec:configurations}

To describe our configuration, we first introduce some notation. Let $\{ \cdots \}$ denote a multiset, let $\ang{\cdots}$ denote a list, and let $(\cdots)$ denote an ordered tuple. Also, given positive integers $f$ and $g$, define
\[
    \{p\}_{f,g} = \{p_{f,1}, p_{f,2}, \ldots, p_{f,g}\} \qquad\text{and}\qquad
    \ang{(s,t)_{f,g}} = \ang{(s_{f,1},t_{f,1}), (s_{f,2},t_{f,2}), \ldots, (s_{f,g},t_{f,g})}.
\]
For a node in the quadtree, all possible groupings of the $k$ rounded paths are represented as a configuration $(A, B, C)$ where: 

\begin{flalign*}\label{eq:configuration}
A & ~ = ~ \big( \{p\}_{1,2 e_1}, \{p\}_{2,2 e_2}, \ldots, \{p\}_{k,2 e_k} \big),\\
B & ~ = ~ \big( \ang{(s,t)_{1,e_1}}, \ang{(s,t)_{2,e_2}}, \ldots, \ang{(s,t)_{k,e_k}} \big), \\
C & ~ = ~ \big(l_1, l_2, \ldots l_k\big)
\end{flalign*}
where 

\begin{itemize}
  \item $A$ is a $k$-element ordered tuple of multisets containing portals. The $h$th multiset is limited to no more than $r$ portals on each of the square's four edges. The total size of each multiset should be an even number $2e_h \leq 4r$.
  \item $B$ is a $k$-element ordered tuple of lists, the $h$th list representing all the entry-exit pairings for the $h$th tour. Each tuple $(s_{h,j},t_{h,j})$ represents a pairing of two distinct portals from the $h$th multiset in $A$.
  \item $C$ is an ordered tuple of $k$ lengths indicating that the $h$th path within this configuration has a rounded length of $l_h$.
\end{itemize}

The DP lookup table is indexed by quadtree node and configuration. A particular value in the table is set to true when the algorithm finds that the specified configuration is achievable within the node. Otherwise, the value is by default set to false. 

\subsubsection{Single Node Runtime} \label{subsubsec:single_node_runtime}

Throughout, we use ``$\log$'' and ``$\ln$'' to denote the base-2 and natural logarithms, respectively. We will make use of the following easy bounds on $\ln (1+x)$, which follows directly from the well known inequality $1 - \frac{1}{x} \leq \ln x \leq x-1$ for all $x \geq 0$.

\begin{lemma} \label{lem:natural_log_estimate}
For $0 < x < 1$, $\frac{x}{2} < \ln (1+x) < x$.
\end{lemma}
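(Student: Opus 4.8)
The plan is to derive both inequalities directly from the two-sided estimate $1 - \frac{1}{y} \le \ln y \le y - 1$ (valid for all $y > 0$, with equality only at $y = 1$), instantiated at $y = 1 + x$. Since the whole statement is quantitative and elementary, the proof is essentially a one-line substitution for each side; the only thing to watch is that we want \emph{strict} inequalities, which we get for free from $0 < x < 1$.

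For the upper bound, I would apply $\ln y \le y - 1$ with $y = 1 + x$, giving $\ln(1+x) \le (1+x) - 1 = x$. Because $x > 0$ we have $1 + x \ne 1$, so the inequality $\ln y \le y-1$ is strict here, yielding $\ln(1+x) < x$.

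For the lower bound, I would apply the left inequality $1 - \frac{1}{y} \le \ln y$ (equivalently, $\ln y \ge 1 - \frac1y$, which is just $\ln y \le y - 1$ with $y$ replaced by $1/y$) at $y = 1 + x$, obtaining $\ln(1+x) \ge 1 - \frac{1}{1+x} = \frac{x}{1+x}$. Now I would use the hypothesis $x < 1$: it gives $1 + x < 2$, hence $\frac{x}{1+x} > \frac{x}{2}$ (using $x > 0$ to keep the direction of the inequality when comparing denominators). Chaining these, $\ln(1+x) > \frac{x}{2}$, which completes the proof.

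The only potential ``obstacle'' is bookkeeping of strict versus non-strict inequalities, and that is dispatched immediately: on the upper side $1+x > 1$ forces strictness in $\ln y \le y-1$, and on the lower side the step $\frac{x}{1+x} > \frac{x}{2}$ is strict precisely because $x > 0$ and $1+x < 2$ are both strict under the hypothesis $0 < x < 1$. No further work is needed.
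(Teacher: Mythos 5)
Your proof is correct and follows essentially the same route as the paper, which derives the lemma directly from the standard two-sided bound $1 - \frac{1}{y} \leq \ln y \leq y - 1$. Your handling of the lower bound via $\ln(1+x) \geq \frac{x}{1+x} > \frac{x}{2}$ (using $1+x < 2$) is exactly the intended use of the hypothesis $x < 1$, and the strictness bookkeeping is fine.
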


\begin{lemma} \label{lem:runtime-lem}
The number of rounded values at a level $i$ node is $O \left( \frac{1}{\eps} \log^2{\left( \frac{n}{\eps} \right) } \right)$.
\end{lemma}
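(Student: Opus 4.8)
The plan is to bound directly the number of tick marks on the level-$i$ rounding scale. Recall that this scale runs geometrically from its lower bound $\delta = \frac{L}{2^i m}$ up to the zig-zag upper bound $L^2$, stepping up by a factor of $1+\alpha$ with $\alpha = \frac{2\eps}{\log L}$. Hence the number of distinct rounded values is at most $\log_{1+\alpha}\big(L^2/\delta\big) + O(1) = \frac{\ln(L^2/\delta)}{\ln(1+\alpha)} + O(1)$, and the whole task reduces to estimating this ratio.

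First I would bound the denominator. Because $L = \frac{64kn}{\eps}$ is large while $\eps$ is small, we have $\alpha = \frac{2\eps}{\log L} < 1$, so Lemma~\ref{lem:natural_log_estimate} applies and gives $\ln(1+\alpha) > \frac{\alpha}{2} = \frac{\eps}{\log L}$; equivalently $\frac{1}{\ln(1+\alpha)} < \frac{\log L}{\eps}$.

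Next I would bound the numerator. Writing the ratio as $\frac{L^2}{\delta} = L \cdot 2^i \cdot m$, I observe that the quadtree has depth at most $\log L$ --- its cells start with side length $L$, each split halves the side, and the recursion stops once the side reaches $1$ --- so $2^i \le L$. Together with $m = O\big(\frac{k \log L}{\eps}\big)$ this yields $\frac{L^2}{\delta} = O\big(\frac{k L^2 \log L}{\eps}\big)$, and taking logarithms (using that $k$ is a constant and that $\frac{1}{\eps} < L$, since $n \ge 1$ forces $L \ge \frac{64}{\eps}$) gives $\ln(L^2/\delta) = O(\log L)$.

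Multiplying the two estimates, the number of rounded values is $O\big(\frac{\log L}{\eps} \cdot \log L\big) = O\big(\frac{\log^2 L}{\eps}\big)$, and since $L = O(n/\eps)$ we have $\log L = O(\log(n/\eps))$, which gives the claimed bound $O\big(\frac{1}{\eps} \log^2(n/\eps)\big)$. The one point that needs a little care is seeing that the level index $i$ washes out of the final answer: passing to a deeper level only shrinks $\delta$ by a factor polynomial in $L$, so it changes $\ln(L^2/\delta)$ by at most an additive $O(\log L)$ --- the bound is therefore uniform over all levels $i$. I do not expect a real obstacle; it is mainly a matter of carrying these estimates through carefully and confirming at the outset that $\alpha < 1$ so that Lemma~\ref{lem:natural_log_estimate} may be invoked.
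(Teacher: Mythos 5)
Your proposal is correct and follows essentially the same route as the paper's proof: both count the geometric tick marks between $\delta = \frac{L}{2^i m}$ and $L^2$, bound $\frac{1}{\ln(1+\alpha)}$ via Lemma~\ref{lem:natural_log_estimate}, use $2^i \le L$ to make the bound uniform in the level, and then substitute $L = \frac{64kn}{\eps}$, $m = O\big(\frac{k\log L}{\eps}\big)$, and the value of $\alpha$ (with $k$ constant) to reach $O\big(\frac{1}{\eps}\log^2(n/\eps)\big)$. The only cosmetic difference is that you use $\alpha = \frac{2\eps}{\log L}$ from Section~\ref{subsec:dp_structure} while the paper's proof uses $\alpha = \frac{\eps}{2\log L}$; this constant-factor discrepancy is internal to the paper and does not affect the asymptotic bound.
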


\begin{proof}
An upper bound on the number of rounded values can be obtained by calculating the smallest integer $z$ where
$\delta(1 + \alpha)^z = L^2$. We take the $\log$ of both sides and solve for $z$:
\begin{align*} 
    \delta(1 + \alpha)^z 
        & ~ = ~ L^2 \\ 
    \frac{L}{2^i m} (1 + \alpha)^z
        & ~ = ~ L^2 \\
    (1 + \alpha)^z
        & ~ = ~ 2^i L m \\
    z 
        & ~ = ~ \frac{\ln \left(2^i L m \right)}{\ln \left( 1 + \alpha \right)}. 
\end{align*}
We can assume that $\eps < 1$, which means $\alpha = \frac{\eps}{2\log(L)} < \frac{1}{2\log(64 k n)} \leq \frac{1}{10}$. Lemma 2.1 can then be applied to $\ln(1 + \alpha)$ so that we have
\[
    z 
        ~ = ~ \frac{\ln \left(2^i L m \right)}{\ln \left( 1 + \alpha\right)} 
        ~ < ~ \frac{2 \ln \left( 2^i L m \right)}{\alpha}.
\]
Moreover, since $i \leq \log{L}$, we have $2^i \leq 2^{\log{L}} = L$, and thus
\[
    \frac{2 \ln \left( 2^i L m \right)}{\alpha} 
        ~ \leq ~ \frac{2}{\alpha} \ln \left( L^2 m \right).
\]
We finally plug in values for $L = \frac{64 k n}{\eps}$, $m = O\left(\frac{k\log{L}}{\eps} \right)$, and $\alpha = \frac{\eps}{2\log(L)}$, yielding
\begin{align*} 
    \frac{2}{\alpha} \ln{\left( L^2 m \right) } 
        & ~ = ~ \frac{4\log{L}}{\eps} \ln{\left( L^2 m \right) } \\
        & ~ = ~ \frac{4}{\eps} \log{\left( \frac{64 k n}{\eps} \right)} \ln{\left( \left( \frac{64 k n}{\eps} \right)^2 O \left(\frac{k \log{\frac{k n}{\eps}}}{\eps} \right) \right) } \\
        & ~ = ~ \frac{4}{\eps} \log{\left( \frac{64 k n}{\eps} \right)} \frac{1}{\log{e}} \left[ \log{ \left( \frac{64 k n}{\eps} \right)^2 + \log \left( O \left( \log{\frac{k n}{\eps}} \right) \right) + \log{\left( O \left( \frac{k}{ \eps} \right) \right)} } \right] \\
        & ~ = ~ O \left( \frac{1}{\eps} \left[ \log^2{\left( \frac{k n}{\eps} \right) }  + \log{\left( \frac{k n}{\eps} \right) } \log{\left(\frac{k}{\eps}\right)} \right] \right).
\end{align*}
We can assume by nature of the problem that the number of tours is bounded by the number of points. Therefore, since $k \leq n$, we simplify our Big O bound to
\[
    O \left( \frac{1}{\eps} \log^2 \left( \frac{n}{\eps} \right) \right).
\]

\end{proof}

\subsection{DP Recursion} \label{subsec:dp_recursion}

We define the $(m,r)$-multipath-multitour problem in the same way as Arora, but with the addition of our multiple tours. That is, an instance of this problem is specified by the following:

\begin{enumerate}[label=(\alph*)]
  \item[\stepcounter{enumi}\theenumi] A nonempty square in the shifted quadtree.
  \item For our $h$th tour, a multiset $\{p\}_{h,2 e_h}$ containing $\leq r$ portals on each of the square's four sides. The total amount of such portals in the $h$th multiset will be $2e_h \leq 4r$.
  \item A list of tuples $\ang{(s,t)_{h,e_h}}$ indicating ordered pairings between the $2e_h$ portals specified in (b).
  \item An instance of (b) and (c) for each of our $k$ tours.
\end{enumerate}

As shown earlier (in Section~\ref{subsubsec:configurations}), a configuration represents one possible combination of rounded tour lengths for a particular instance of the $(m,r)$-multipath-multitour problem.

\subsubsection{Base Case} \label{subsubsec:base_case}

Given a fixed choice in (1) the $2e_h$ portals for each of our $k$ multisets, containing at most $r$ portals from each square edge, and (2) the $\leq e_h$ entry-exit pairings for each of the $k$ tours, we have two cases for each leaf in our quadtree.

\begin{enumerate}
  \item There is no point to visit in the leaf: then we simply take the lengths of the given entry-exit pairing paths for each tour, setting all configurations that they round up to as true in the DP table. This takes $O(1)$.
  \item There is one point in the leaf: then we iterate over all assignments of the point to each tour within the node \emph{and} each entry-exit pairing within that tour (see Figure \ref{fig:leaves}). The path between the chosen entry-exit pairing within that tour is bent to visit the point, and we set that particular configuration of tours to true. We have $k$ tours and $\leq 2r$ pairings to choose for each tour, so this takes $2kr = O(kr)$.
\end{enumerate}

\begin{figure}[htbp]
  \centerline{\includegraphics[scale=1]{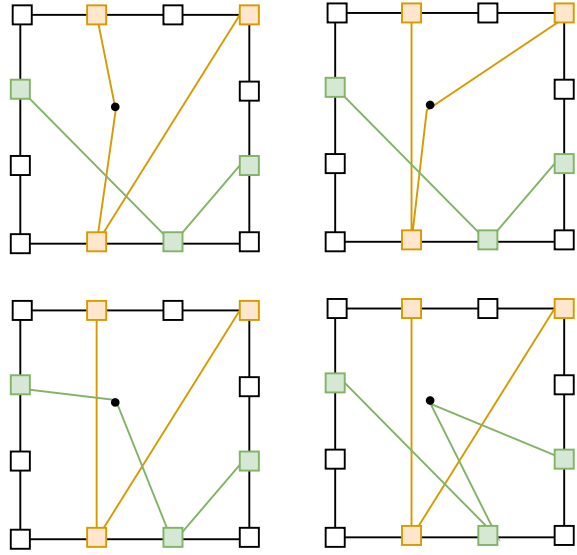}}
  \caption{A visualization of case (2) in our base case with $m=2$. The top row shows our two choices in which entry-exit path bends to visit the point when the orange tour is picked. The bottom row represents the possible configurations when the green tour is picked.}
  \label{fig:leaves}
\end{figure}

\subsubsection{Root Case} \label{subsubsec:root_case}

We consider the case where we have one depot, chosen from our $n$ locations, assigned for all of our $k$ tours. Then we require that each tour visits the depot within its leaf node. This involves $2r$ choices for each tour in deciding which entry-exit path bends to the point, and thus $O(r^k)$ different combinations of tour paths within the leaf. 

\subsubsection{General Case} \label{subsubsec:general_case}

Assume by induction that for level depth $> i$, the $(m,r)$-multipath-multitour problem has already been solved (that is, all possible configurations for these squares are stored in the lookup table). Let $S$ be any square at depth $i$, with children $S_1, S_2, S_3,$ and $S_4$. We have 

\begin{itemize}
  \item $(m+4)^{4rk}$ choices in (a), our multisets containing $\leq r$ portals on each of the square's four sides, for each of our tours
  \item $(4r)!^k$ choices in (b), the associated portal pairings for our k tours
\end{itemize}

For each of these choices in (a) and (b) for the outer edges of $S$, we have another set of choices within the four inner edges created by $S$'s children. Specifically, we have

\begin{itemize}
  \item $(m+4)^{4rk}$ choices in (a'), the multisets of $\leq r$ portals for each of our tours within the inner edges of $S_1, S_2, S_3, S_4$ 
  \item $(4r)!^k (4r)^{4rk}$ choices in (b'), the portal pairings for each tour within the inner edges (the term $(4r)^{4rk}$ represents the number of ways of placing the inner edge portal chosen in (a') within the entry-exit ordering of $S$'s tour. There are $\leq 4r$ outer edge portals to choose from, and the inner edge portal can only be crossed $\leq 4r$ times per tour).
\end{itemize}

\begin{figure}[htbp]
  \centerline{\includegraphics[scale=1]{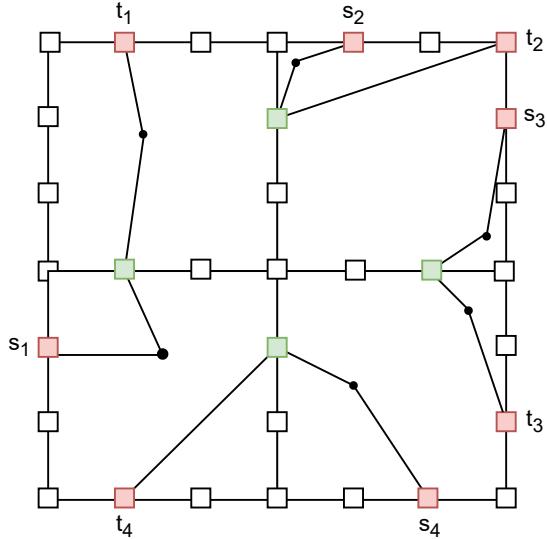}}
  \caption{An example of the recursion step for one of our $k$ tours. The red portals represent our fixed choice in (a) and (b), while the green display a particular configuration of our inner edge portals leading to a valid tour. The extension to $k$ tours works similarly, except we iterate through all rounded lengths for each fixed order of portals.}
  \label{fig:dp_recursion}
\end{figure}

Combining our choices in (a) and (b) for the outer edges with our choices in (a') and (b') for the inner edges leads to an $(m,r)$-multipath-multitour problem in the four children, whose solutions (by induction) exist in the lookup table with corresponding rounded lengths. There are up to $O \left( \left( \frac{1}{\eps} \log^2{\left( \frac{n}{\eps} \right) } \right)^{4k} \right)$ different rounded lengths stored for each of the $k$ tours, for each child square (Lemma~\ref{lem:runtime-lem}). We add each combination of rounded tour lengths together for the four children, and further round this value on level $i$'s scale. Finally, we set the resulting configurations of portal multi-sets, pairings, and tour lengths to true in our DP table.

Our runtime, then, is expressed as the product between the number of quadtree nodes ($T = O(n\log{L)}$) and the total number of choices outlined above:
\[
    O\left( T (m+4)^{8rk} (4r)^{4rk} (4r)!^{2k} \left( \left( 1/\eps \right) \log^2{\left( n/\eps \right) } \right)^{4k} \right) 
        ~ = ~ O \left(n \left( \left( k/\eps \right) \log {\left( n/\eps \right) } \right)^{O\left(k^2/\eps \right)} \right)
\]
Since we are treating $k$ as a constant, then, we simplify this bound to 
\[
    O \left(n \left( (1/\eps) \log {\left( n/\eps \right) } \right) ^ {O(1/\eps)} \right).
\]
It is evident that the above runtime of our DP algorithm is polynomial in $n$.

\section{Error Analysis} \label{sec:structure_thm}

We have three different sources of error in our algorithm. The first derives from our enforcement of $(m,r)$-light lines within the quadtree, as this requires us to shift each tour so they satisfy the portal pairing requirement. Moreover, our rounding incurs a small amount of cost on the min-max tour length. Finally, the perturbation step adds error to the solution as well. The following theorems will prove what those costs are, and we will show that compounding them on top of each other in our algorithm still leads to a min-max length bounded by $(1 + \eps) \OPT$.

\subsection{Arora's Error} \label{subsec:arora_structure}

\begin{lemma} \label{lem:structure_arora}
The total error incurred on the makespan when enforcing the tours to be $(m,r)$-light is $\leq (1 + \eps') \OPT$. 
\end{lemma}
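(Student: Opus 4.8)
The plan is to reduce this lemma to Arora's Structure Theorem applied independently to each of the $k$ tours, and then to control the failure probabilities by a union bound, which is why the parameters $m$ and $r$ must be inflated by a factor of $k$ relative to Arora's. First I would fix the optimal {\mtsp} solution, which consists of $k$ tours $T_1, \ldots, T_k$, and observe that the makespan $\OPT$ is the maximum of their lengths, so each individual tour has length $\ell_h \le \OPT$. For a randomly chosen $(a,b)$-shift of the quadtree dissection, Arora's charging argument bounds the \emph{expected} extra length incurred in making a single tour $(m,r)$-light: each time the tour crosses a grid line at level $i$, the expected detour to route it through the nearest portal is $O(\ell_h / m)$ summed appropriately over levels, and the probability that more than $r$ crossings of a single edge survive is controlled by Markov's inequality. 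Concretely, choosing $m = O((k \log L)/\eps)$ and $r = O(k/\eps)$, the expected extra length for tour $T_h$ is at most $\frac{\eps'}{2k}\,\ell_h \le \frac{\eps'}{2k}\,\OPT$, and the probability that the patching for tour $T_h$ fails to be $(m,r)$-light (or exceeds, say, twice its expectation) is at most $\frac{1}{4k}$.

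Next I would take a union bound over the $k$ tours: with probability at least $1 - k \cdot \frac{1}{4k} = 3/4$, simultaneously every tour $T_h$ becomes $(m,r)$-light with extra length at most $\frac{\eps'}{k}\,\OPT$. Since the perturbations of distinct tours through the portal grid are performed independently and do not interfere (portals may be shared, but the $r$-bound is per tour), the new length of tour $T_h$ is at most $\ell_h + \frac{\eps'}{k}\,\OPT \le (1 + \eps')\,\OPT$. Taking the maximum over $h$ preserves this bound, so the makespan of the $(m,r)$-light solution is at most $(1+\eps')\,\OPT$, which is exactly the claim. The extra $k$ factor in $m$ shrinks the per-line expected detour enough that, after summing over the (unchanged) $O(\log L)$ levels, the total per-tour error is $\frac{\eps'}{k}\,\OPT$ rather than $\eps'\,\OPT$; the extra $k$ factor in $r$ shrinks each per-edge failure probability from $O(1/\log L \cdot \eps')$-type bounds down by a factor of $k$ so the union bound still closes.

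The main obstacle I anticipate is making precise that Arora's analysis genuinely applies tour-by-tour without any cross-tour coupling — in particular, that the DP's requirement that \emph{all} $k$ tours be $(m,r)$-light with respect to the \emph{same} shifted dissection does not force a worse trade-off. The key point to argue carefully is that the random shift is shared across all tours, so the expectations for the individual tours are not independent, but expectation is linear regardless of dependence, and the union bound over the $k$ bad events needs only marginal (not joint) probability bounds. A secondary technical point is bookkeeping the relationship between $\eps'$ here and the global $\eps$ of Theorem~\ref{thm:main}: $\eps'$ is chosen to be a suitable constant fraction of $\eps$ (the perturbation step already spent $\eps/4$, and the rounding step will spend another portion), so that the three error sources compose to at most $(1+\eps)\,\OPT$ overall. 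I would state the per-tour bound as a direct invocation of Arora's Structure Theorem with his $\eps$ replaced by $\eps'/k$ and his success probability amplified, then assemble the union bound in a short paragraph.
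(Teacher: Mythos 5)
Your overall route is the same as the paper's: invoke Arora's expected patching cost tour by tour with $m$ and $r$ inflated by a factor of $k$, then combine Markov's inequality with a union bound over the $k$ tours under the single shared random shift (and you correctly note that only marginal probability bounds are needed, so the dependence through the common shift is harmless). However, one step as written would fail: you assert that each tour exceeds twice its expected patching cost with probability at most $\frac{1}{4k}$. Markov's inequality gives only $\frac{1}{2}$ for exceeding twice the expectation, and no stronger concentration is available here, so your union bound $1 - k\cdot\frac{1}{4k} = \frac{3}{4}$ does not go through as stated.

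The repair is exactly the paper's accounting, and it also shows you demanded more than is needed: because the objective is a maximum, the per-tour errors do not accumulate, so each tour may spend the full budget $\eps'\ell_h$ rather than $\frac{\eps'}{k}\OPT$. With the $k$-inflated parameters one has $\mathbb{E}(X_h) \leq \frac{\eps'\ell_h}{2k}$, so Markov at threshold $\eps'\ell_h$ (which is $2k$ times the expectation) gives $P(X_h \geq \eps'\ell_h) \leq \frac{1}{2k}$; the union bound then yields simultaneous success with probability at least $\frac{1}{2}$, and on that event every tour has length at most $(1+\eps')\ell_h \leq (1+\eps')\OPT$, hence so does the makespan. (The paper then derandomizes by enumerating all $(a,b)$-shifts.) If you truly wanted per-tour error $\frac{\eps'}{k}\OPT$ with per-tour failure probability $O(1/k)$ using only Markov, you would need the expectation driven down to $O(\eps'/k^2)\,\OPT$, i.e., parameters inflated by $k^2$, which is unnecessary.
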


\begin{proof}
Let there be a minimum internode distance of 8, with $(a, b)$ shifts applied randomly to the quadtree (see Arora). Then for $\eps' > 0$, Arora proves that the expected cost of enforcing one tour to be $(m,r)$-light is no more than $\frac{6g l}{s}$, where $l$ is the optimal length of the tour, $g = 6$ and $s = 12g/\eps'$. 

Take $l_{max}$ as the length of the optimal solution's makespan. Unlike Arora's problem, we have multiple tours. Thus we need to make sure that each one has a small enough expected cost so its length does not overpower $(1 + \eps') l_{max}$. We take $s = \frac{12gk}{\eps'}$ (where $m \geq 2s\log{L}$ and $r = s + 4$). This means that $\mathbb{E}(X_h) \leq \frac{\eps' l_i}{2k}$, where $X_h$ is a random variable representing the $h$th tour's incurred cost and $l_h$ is that tour's optimal solution length.

We know $P\left(\bigcap\limits_{h=1}^{k}{(X_h < \eps' l_h)}\right) = 1 - P\left(\bigcup\limits_{h=1}^{k}{(X_h \geq \eps' l_h)}\right)$. By Boole's Inequality,
\[
    P\left( \bigcup\limits_{h=1}^{k}{(X_h \geq \eps' l_h)}\right) 
        ~ \leq ~ \sum_{h=1}^k P(X_h \geq \eps' l_h).
\]
By Markov's Inequality, we know $P(X_h \geq \eps' l_h) \leq \frac{\mathbf{E}(X_h)}{\eps' l_h} = \frac{1}{2k}$, and therefore we have 
\[
    P\left(\bigcap\limits_{h=1}^{k}{(X_h < \eps' l_h)}\right) 
        ~ \geq ~ 1 - k\left(\frac{1}{2k}\right) 
        ~ =    ~ \frac{1}{2}.
\]
We can derandomize this result by iterating through each of the possible $(a,b)$ shifts and taking the lowest possible makespan, which runs in time $O(n^2)$.
\end{proof}

\subsection{Rounding Argument Error} \label{subsec:rounding_error}

\begin{lemma} \label{lem:error}
The error accumulated to the root level quadtree node by our rounding argument is $(1 + \alpha)^{\log{L}} \OPT$.
\end{lemma}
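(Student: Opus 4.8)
The plan is to compare the rounded length stored by the DP at the root with the length of a genuine $(m,r)$-light near-optimal solution, propagating the comparison up the quadtree one level at a time. Fix an $(m,r)$-light solution $\mathcal{T}^*$ whose makespan is at most $(1+\eps')\OPT$ (this exists by Lemma~\ref{lem:structure_arora}). For a quadtree node $v$ and a tour $h$, let $l_{v,h}$ be the true Euclidean length of the portion of $\mathcal{T}^*$'s $h$th tour lying inside $v$'s square, and let $\tilde l_{v,h}$ be the rounded length the DP records for the configuration that $\mathcal{T}^*$ induces on $v$. I would prove, by induction on $v$ from the leaves toward the root, a bound of the form $\tilde l_{v,h} \le (1+\alpha)^{d(v)}\, l_{v,h}$, up to an additive term of at most one interportal distance at $v$'s level (discussed below), where $d(v)$ counts the rounding steps performed along the branch running from $v$ down to a deepest leaf of its subtree. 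Since the shifted quadtree has depth at most $\log L$, we have $d(\text{root}) \le \log L$, and applying the bound at the root yields the claim.

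\emph{Base case (leaves).} At a leaf the DP stores the round-up, on that level's scale, of the length of each entry--exit path (bent, if necessary, to reach the single point). As consecutive tick marks differ by a factor $1+\alpha$ and the scale's lower bound $\delta$ equals the interportal distance at that level, the stored value is at most $(1+\alpha)\,l_{v,h}$ whenever $l_{v,h}\ge\delta$, and at most $\delta$ otherwise.

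\emph{Inductive step (internal node $S$ at level $i$ with children $S_1,\dots,S_4$).} The portal pairing that $\mathcal{T}^*$ induces on $S$ decomposes consistently into pairings on $S_1,\dots,S_4$ whose solutions are, by the induction hypothesis, recorded in the table, and $l_{S,h}=\sum_{g=1}^4 l_{S_g,h}$, since the inner-edge portals where the four sub-paths meet add no length. The DP records $\tilde l_{S,h}=\mathrm{roundup}_i\!\big(\sum_{g=1}^4 \tilde l_{S_g,h}\big)$; by the induction hypothesis each $\tilde l_{S_g,h}\le(1+\alpha)^{d(S)-1} l_{S_g,h}$, so their sum is at most $(1+\alpha)^{d(S)-1} l_{S,h}$, and one more round-up multiplies by at most $1+\alpha$, giving $\tilde l_{S,h}\le(1+\alpha)^{d(S)} l_{S,h}$. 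At the root this gives, for every tour $h$, a recorded rounded length at most $(1+\alpha)^{\log L}$ times the true length of $\mathcal{T}^*$'s $h$th tour; hence the recorded rounded makespan, and therefore the minimum-makespan configuration the algorithm outputs, is at most $(1+\alpha)^{\log L}$ times $\mathcal{T}^*$'s makespan. (With $\alpha=2\eps/\log L$ and Lemma~\ref{lem:natural_log_estimate}, $(1+\alpha)^{\log L}<e^{2\eps}$, the $1+O(\eps)$ factor the final theorem needs.)

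\emph{Main obstacle.} The delicate point is the additive error incurred when an accumulated length at level $i$ is below the level-$i$ lower bound $\delta=L/(2^i m)$ and must be rounded up to $\delta$ --- an additive, not a multiplicative $(1+\alpha)$, step. The plan is to charge each such additive $\delta$ against Arora's own portal-displacement slack: as noted in Section~\ref{subsec:dp_structure}, making a tour $(m,r)$-light already tolerates $\delta$ of error per level for shifting its crossings to portals at that level, so these additive terms are already subsumed in the $(1+\eps')\OPT$ bound of Lemma~\ref{lem:structure_arora} rather than accumulating on top of it. Two routine details also need checking: that summing the four children before a single round-up costs only one $(1+\alpha)$ factor (immediate, since $1+\alpha$ distributes over the sum), and that the pairing $\mathcal{T}^*$ induces on $S$ really does split consistently across $S_1,\dots,S_4$, so the identity $l_{S,h}=\sum_g l_{S_g,h}$ holds exactly.
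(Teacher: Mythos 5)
Your proof follows essentially the same route as the paper's: a bottom-up induction over the quadtree in which each level's round-up contributes one factor of $(1+\alpha)$, telescoping to $(1+\alpha)^{\log L}$ at the root, applied to each tour separately. If anything, your version is more careful than the paper's, which writes the per-level relations as equalities against the true lengths and does not explicitly address the additive sub-$\delta$ round-up that you propose to charge against the portal-displacement slack of Lemma~\ref{lem:structure_arora}.
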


\begin{proof}
We take a look at each tour separately. At our root level, the total length of the rounded tour can be represented as $l_1^{(1)}$. But $l_1^{(1)}$ is just the sum of rounded lengths corresponding to the tour in each of its children. That is, $l_1^{(1)} = (1 + \alpha) \left(l_1^{(2)} + l_2^{(2)} + l_3^{(2)} + l_4^{(2)}\right)$. Our notation indicates that $l_1^{(2)}$ is the first indexed square at level 2, $l_2^{(2)}$ is the second, and so on. In general we have 
\[
    l_j^{(i)} 
        ~ = ~ (1 + \alpha) \left(l_{4j-3}^{(i+1)} + l_{4j-2}^{(i+1)} + l_{4j-1}^{(i+1)}  + l_{4j}^{(i+1)} \right).
\]
If we recursively expand each term in our initial equation, then, our root level rounded tour length becomes
\[
    l_1^{(1)} 
        ~ = ~ (1 + \alpha)^{\log{L} - 1} \left(l_1^{(\log{L})} + l_2^{(\log{L})} + \cdots + l_{4^{\log{L}}}^{(\log{L})} \right),
\]
where $\log{L}$ is the number of levels we have in our quadtree.

But $l_j^{(\log{L})}$ just represents the rounded lengths of our leaves, which are simply bounded by $\left( 1 + \alpha\right) t_j^{(\log{L})}$ (where $t$ represents the true optimum tour length in that node). Thus we have
\begin{align*}
    l_1^{(1)} 
        & ~ = ~ (1 + \alpha)^{\log{L}} \left(t_1^{(\log{L})} + t_2^{(\log{L})} + \cdots + t_{4^{\log{L}}}^{(\log{L})} \right) \\
        & ~ = ~ \left( 1 + \alpha \right)^{\log{L}} \OPT.
\end{align*}
\end{proof}

\begin{lemma} \label{lem:structure_rounding}
For $\alpha = \frac{\eps'}{2\log{L}}$, the total error of the minimum makespan when applying our rounding argument is at most $(1 + \eps') \OPT$. 
\end{lemma}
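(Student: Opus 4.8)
The plan is to combine Lemma~\ref{lem:error} with the elementary logarithm estimates of Lemma~\ref{lem:natural_log_estimate}. Lemma~\ref{lem:error} already tells us that, for a single tour, the rounding performed during the dynamic program inflates the reported length by a factor of at most $(1+\alpha)^{\log L}$ relative to its true optimum length. So it suffices to prove the single inequality $(1+\alpha)^{\log L}\le 1+\eps'$ for the prescribed value $\alpha = \frac{\eps'}{2\log L}$; this bound then applies simultaneously to each of the $k$ tours, and in particular to the longest of them.

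First I would verify that $\alpha<1$, so that Lemma~\ref{lem:natural_log_estimate} is applicable to $\ln(1+\alpha)$. Since $L=\frac{64kn}{\eps}\ge 64$ we have $\log L\ge 6$, and assuming $\eps'<1$ gives $\alpha=\frac{\eps'}{2\log L}<\frac{1}{12}<1$. Next, taking natural logarithms, the target inequality $(1+\alpha)^{\log L}\le 1+\eps'$ is equivalent to $\log L\cdot\ln(1+\alpha)\le\ln(1+\eps')$. Here the two sides are controlled by the two halves of Lemma~\ref{lem:natural_log_estimate}: the upper estimate $\ln(1+x)<x$ gives
\[
    \log L\cdot\ln(1+\alpha) ~<~ \log L\cdot\alpha ~=~ \frac{\eps'}{2},
\]
while the lower estimate $\frac{x}{2}<\ln(1+x)$ gives $\ln(1+\eps')>\frac{\eps'}{2}$. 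Chaining these yields $\log L\cdot\ln(1+\alpha)<\frac{\eps'}{2}<\ln(1+\eps')$, which is exactly what we need, so $(1+\alpha)^{\log L}<1+\eps'$.

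Finally, I would note that the rounding error is accrued per tour, so the displayed bound shows that the rounded length of every one of the $k$ tours is at most $(1+\eps')$ times its optimal length; hence the maximum rounded tour length --- the makespan of the rounded solution --- is at most $(1+\eps')\OPT$, completing the proof. I do not expect any genuine obstacle: the substantive content lives in Lemma~\ref{lem:error}, and this statement is just the calibration of the granularity parameter $\alpha$. The only minor point to be careful about is that ``$\log L$'' here denotes the quadtree depth $\lceil\log L\rceil$ rather than exactly $\log L$; replacing $\log L$ by $\lceil\log L\rceil$ everywhere (including in the definition of $\alpha$) leaves the argument untouched, since both estimates above still go through verbatim.
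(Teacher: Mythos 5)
Your proposal is correct and follows essentially the same route as the paper's own proof: invoke Lemma~\ref{lem:error} for the per-tour factor $(1+\alpha)^{\log L}$, then show $\log L\cdot\ln(1+\alpha)\le\alpha\log L=\eps'/2\le\ln(1+\eps')$ using the estimates of Lemma~\ref{lem:natural_log_estimate}. Your added checks (that $\alpha<1$ so the lemma applies, and the remark about $\lceil\log L\rceil$) are harmless refinements of the same argument.
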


\begin{proof}
By Theorem 2, we know the error accumulated for each of the $k$ tours is bounded by $(1 + \alpha)^{\log{L}} \OPT$. Thus we need $(1 + \alpha)^{\log{L}} \leq (1 + \eps')$ to prove our error bound. We simplify our objective:
\[
    \log{(L)} \ln{(1 + \alpha)} 
        ~ \leq ~ \ln{(1 + \eps')}.
\]
By definition of the natural log, $\ln(1 + \alpha) \leq \alpha$ since $\alpha > 0.$ Thus we have
\[
    \log{(L)} \ln{(1 + \alpha)} 
        ~ \leq ~ \alpha \log{(L)}.
\]
Finally we plug in our $\alpha$ value:
\[
    \alpha \log{(L)} 
        ~ = ~ \frac{\eps'}{2\log{(L)}} \log{(L)} 
        ~ = ~ \frac{\eps'}{2}.
\]
By Lemma 2.1, we know $\ln(1 + \eps') \geq \frac{\eps'}{2}$. We therefore satisfy the inequality, as
\[
    \log{(L)} \ln{(1 + \alpha)} 
        ~ \leq ~ \frac{\eps'}{2} 
        ~ \leq ~ \ln{(1 + \eps')}.
\]
\end{proof}

\subsection{Total Error} \label{subsec:total_error}

We know the perturbation affects our optimum makespan length by adding at most $\frac{\eps}{4} \OPT$. Therefore we have a bounded makespan of $\left(1 + \frac{\eps}{4}\right) \OPT$ after this step. Given that $\eps' = \eps/4$, then, we layer the rounding and portal errors on top of this length to conclude that the total makespan length is bounded by
\[
    \left(1+\frac{\eps}{4}\right)^3 \OPT 
        ~ \leq ~ \left(1 + \eps\right)\OPT \qquad\big(\hbox{given that $\eps \leq \frac{\sqrt{13}-3}{2}$}\big),
\]
as desired.

\section{Problem Variations} \label{sec:variations}

\begin{enumerate}
  \item We require a subset of our $n$ points to be visited by a specific one of the tours. In this case, we would simply incur the same argument at the leaf nodes of these points as our base case in the DP Section~\ref{subsubsec:base_case}. This would just add a constant factor to the final runtime for each point in the subset.
\end{enumerate}

\section{Conclusion} \label{sec:conclusion}

We have demonstrated a PTAS for the min-max Euclidean multiple TSP. Our algorithm builds on top of Arora's PTAS for the single-tour TSP: that is, we restrict the search space of the solution by limiting tours to cross quadtree square edges a certain amount of times at evenly-spaced portals. On top of this simplification, we require that the $k$ resulting tour lengths within a quadtree node are rounded on a logarithmic scale depending on the square level. We then present a dynamic program that stores all possible restricted tour solutions within each level of the quadtree. This algorithm finds the optimal solution (within the portal and rounding simplifications) in polynomial time over $n$, with the degree growing as a function of $k$. Finally, we show that adjusting the optimal solution of an {\mtsp} instance to satisfy the portal and rounding requirements within the quadtree accumulates error of at most $(1 + \eps) \OPT$ units. 

Extending our PTAS solution for the {\mtsp} to higher dimensions would be a straightforward exercise. Moreover, generalizing the PTAS for the {\mtsp} to any shortest-path metric of a weighted planar graph should be an approachable problem following the solution presented by Arora, Grigni, Karger, Klein and Woloszyn for the single TSP~\cite{AGK98}. An open question remains as to whether there exists a PTAS that runs in polynomial time over both $n$ and $k$. Extensions such as variable speed and capacity restrictions on the $k$ salesmen also warrant exploration. Finally, we wonder whether similar DP techniques could be applied to the original problem that led us to the {\mtsp}: the Freeze Tag Problem~\cite{Bend06}, where robots in space aim to wake each other up in the least amount of time possible. 

\bibliographystyle{alpha} 
\bibliography{shortcuts,multi-tsp}
\end{document}